\newtheorem{theorem}{Theorem}[section]
\newtheorem{lemma}[theorem]{Lemma}
\newtheorem{proposition}[theorem]{Proposition}
\theoremstyle{definition}
\newcommand{\be}{\begin{equation}}
\newcommand{\ee}{\end{equation}}
\newcommand{\bea}{\begin{eqnarray}}
\newcommand{\eea}{\end{eqnarray}}
\newcommand{\ba}{\begin{aligned}}
\newcommand{\ea}{\end{aligned}}
\def\cS{{\mathcal S}}                       %
\def\RR{{\mathbb R}}                    %
\def\CC{{\mathbb C}}                    %
\def\XX{{\mathbb X}}                    %
\def\1{{\mbox{\boldmath $1$}}}          %
\def\tr{\mathrm{tr\,}}                  %
\def\im{\mathrm{Im\,}}
\def\0{{\mbox{\boldmath $0$}}}          %
\def\TT{{\mathbb T}}                    %
\def\g{{\mathfrak g}}  \def\b{{\mathfrak b}} \def\k{{\mathfrak k}}
\def\half{\textstyle{\frac{1}{2}}}
\def\t{{\mathbf{t}}}  \def\tilt{\tilde{\mathbf{t}}} \def\tv{\tilde v}
\def\dt{\left.\frac{d}{dt}\right|_{t=0}}
\def\refcom{}
\def\dagt{{T}}
\def\sgn{\rm{sgn}}
\title[A new model in the Calogero-Ruijsenaars family]{A new model in the Calogero-Ruijsenaars family}
\begin{document}

\maketitle
\author{Ian Marshall}

\address{Faculty of Mathematics, Higher School of Economics, Ulitsa Vavilova 7, Moscow}

\begin{abstract}
Hamiltonian reduction is used to project a trivially integrable system on the Heisenberg double of $SU(n,n)$, to obtain a system of Ruijsenaars type on a suitable quotient space. This system possesses $BC_n$ symmetry and is shown to be equivalent to the standard three-parameter $BC_n$ hyperbolic Sutherland model in the cotangent bundle limit.
\end{abstract}

\section{Introduction}

Consider the Hamiltonian of Ruijsenaars type,
\be\label{openingformula}
\begin{aligned}
&H(q,p) = a^2\sum_{i=1}^ne^{-2q_i}\\
&-
\sum_{i=1}^n (\cos p_i)
\left[{ 1 + \left({1+ b^2}\right)e^{-2q_i} + b^2e^{-4q_i}}\right]^{\half}
\prod_{k\neq i}\left[ 1 - \frac{c^2}{4\sinh^2(q_i-q_k)}\right]^{1/2},
\end{aligned}
\ee
for arbitrary positive constants $a^2, b^2$, and $c^2$. 
The $q_k$ are restricted by a condition of the form
$q_1 > q_2 +s,\ q_2 > q_3 +s,\, ...$  with a suitable $s > 0$, so as to guarantee that
the arguments of all the square-roots in the product are positive.

The integrability of this Hamiltonian will be shown via a Poisson Lie group analog of the reduction argument of Kazhdan-Kostant-Sternberg.

\subsection{Some history}

Following the initial works of Calogero, Moser and Sutherland, Olshanetsky and Perelomov undertook a study of the family of Calogero-type systems, these being characterised by different classical root systems, and different choices for the interaction function. See \cite{Calogero, Moser, Sutherland, OP1, OP2}.
\refcom
Being characterised by the different root systems means that each system is invariant with respect to the action of the appropriate Weyl group. The most general interaction function is the Weierstrass $\wp$\,-\,function, and the most general root system is the $BC_n$ one: all other integrable cases come from these via different limits and special choices of parameters. Actually this is not quite true, as the $A_n$ case sits apart, but it holds true for $(B_n, C_n, D_n)$ root systems, and was a strong indication of a universal property for the $BC_n$ model. However, in order to include all root systems as special cases of the $BC_n$ one it is necessary that there be three independent parameters, whilst Olshanetsky and Perelomov were unable to prove integrability unless only two of these are independent. Following the Olshanetsky and Perelomov work, there was therefore an important missing gap. This deficiency was later repaired \refcom  by Inozemtsev and Meshcheryakov \cite{Inozemtsev}, who showed that the $BC_n$ case with three independent parameters was indeed integrable, which justifies its role as a universal model.

A crucial tool for the Olshanetsky and Perelomov approach was symmetry reduction. They looked at several examples of Lie groups and at cases of trivially integrable flows on them, admitting large groups of symmetries. This permits projection to the corresponding homogeneous space and the result is sure to still be integrable. All of their examples arose in this way. The more general three-parameter result 
was not obtained by symmetry reduction, and so another missing gap remained to be filled: How might one obtain the general integrable $BC_n$ case by a reduction procedure?

Shortly after the work of Olshanetsky and Perelomov appeared, Kazhdan, Kostant and Sternberg produced an article \cite{KKS} which has become one of the classic references for this subject. \refcom They used the method of canonical Hamiltonian symmetry reduction to derive the rational and trigonometric $A_n$ models. Their result remains one of the most famous examples of the use of canonical symmetry reduction, and now appears routinely in textbooks. This method of reduction is usually called ``Marsden-Weinstein reduction'', whilst aficionados will often smile wisely and say ``Of course, this was all known to Lie...'' and it surely has other representations which might vie for priority. The main thing is that as we inch our way towards a better understanding of models like the one discussed here, such methods are important and we can all enjoy the pleasure of using them, while acknowledging that their formulation is owed to several different authors. Indeed the KKS result itself emerged at a time when the reduction approach was asserting itself as especially effective, and it marks a kind of landmark in the general subject of reduction just for that reason. A key role in the KKS result was that of the fixed image taken by the momentum map: this is now often referred to as ``the KKS element'' and  it will appear, in no less crucial a role, in the present article as well.

Since the time of these early articles the collection of systems going variously under the name of Calogero-, Caloger-Moser-, Sutherland- and so on, have been studied for a host of different reasons, and have proved to be a rich source of mathematical treasures.

In 2005 \cite{FehPus} Feher and Pusztai filled a missing part of the story, by showing how to obtain the general hyperbolic $BC_n$ case by canonical symmetry reduction. \refcom Their result is important in relation to the present article, as it uses the setting of the cotangent bundle of $SU(n,n)$ and this is the picture generalised here. 

Meanwhile, in the 1990s Ruijesaars and Schneider had initiated a study of relativistic versions of the models in the Calogero family \cite{RS}. \refcom These were all models of $A_n$ type and they are invariant with respect to the Poincar\'e group, rather than the Galilei group, as had been the case for the standard systems previously studied. A number of interesting articles by Ruijsenaars appeared on this theme, see especially \cite{SR-CRM}, some hinting at a possible reduction approach to their description. In due course, in an article by Gorsky and Nekrasov \cite{GN}, a reduction construction was given to describe one of the Ruijsenaars models, with various suggestions for relevant modifications of the standard point of view used and of the KKS result, and these suggestions were elaborated in several subsequent articles, for example \cite{G, AFM}. \refcom Shortly afterwards Fock and Rosly  showed that the complex trigonometric Ruijsenaars model arises via a canonical reduction construction using a suitably defined classical r-matrix and Poisson Lie groups \cite{FockRosly}, see also \cite{FGNR}. \refcom What is especially interesting in the Fock and Rosly paper is their treatment of the so-called ``dual system'' which appeared in the articles of Ruijsenaars and which arises in a natural way when viewed in the context of reduction.
The reduction viewpoint has been developed in several articles by Feher and Klimcik \cite{FK}, in which they have obtained various Ruijsenaars models as reductions based on the use of Poisson Lie groups, and also derived global results concerning dual systems.

Several non-$A_n$ Ruijsenaars type deformations of Calogero-Moser-Sutherland models were found by van Diejen in \cite{vD}. It is interesting that the \emph{rational} $BC_n$ model of van Diejen is dual to the \emph{hyperbolic} $BC_n$ Sutherland model. This was proved in \cite{Pus} by Pusztai, by the use of reduction.

\subsection{Brief summary of the result}
The present article is a contribution to the project of describing Calogero-type systems, or more particularly of the Ruijsenaars ones. It describes the Poisson Lie group version of the Feher-Pusztai result, and gives rise to what is claimed to be a \emph{Ruijsenaars type deformation of the Hyperbolic $BC_n$ Sutherland model}. The justification for this is that it has all the standard properties of a Ruijsenaars system, and yields the classical three-parameter hyperbolic $BC_n$ system in the non-relativistic limit.

The standard Ruijsenaars type deformation of the hyperbolic $BC_n$ Sutherland model is the one found by van Diejen \cite{vD}. \refcom The system obtained in the present article appears to be different to that one. The reasons for this are not clear. The origins of the hyperbolic van Diejen model via reduction are unknown, so now there is some new mystery to uncover. 

Feher and Pusztai begin by considering the group $K=SU(n,n)$ with maximal compact subgroup denoted by $K_+$. The momentum map for the symplectic lift of the natural right action of $K_+$ on $K$ to the cotangent bundle $T^*K$ is fixed to be a character of $K_+$. The ensuing reduction yields the cotangent bundle of the symmetric space $K/K_+$
(with the canonical symplectic structure modified by the addition of a ``magnetic term'').
This is the secret of the Feher-Pusztai result, as the the set of characters of $K_+$ is a one dimensional space, and the parameter on this space becomes the extra parameter in the result. The result of Olshnetsky and Perelomov, although it was cast in a different language, had effectively ignored the possibility that $T^*(K/K_+)$ could have emerged at anything other than the zero value of that parameter. The rest of the reduction involves the momentum map for the symplectic lift of the natural action of $K_+$ on $K/K_+$ and this is fixed to be a suitable analog of the KKS element. What follows is a matter of direct computations: a concrete formula can be computed for the symplectic structure on the reduced space, and a family of trivially commuting Hamiltonians on $T^*K$, all of which are invariant with respect to both actions of $K_+$, descend to give a commuting family on the reduced - or projected - space. The simplest Hamiltonian in this family is the $BC_n$ Hamiltonian of interest.

The result to be described in the present article follows the Feher-Pusztai one rather closely. First of all the cotangent bundle $T^*K$ is replaced by the Heisenberg double $G$. Then the momentum maps for the left and right actions of $K_+$ on $G$ are fixed, both in ways which precisely imitate the values they were fixed to be in the cotangent case. The reduction is made, again amounting in practice just to a concrete computation, to obtain a formula for the symplectic structure on the reduced space, and then a family of trivially commuting Hamiltonians on $G$ is seen to descend to the reduced space, where in the concrete coordinates involved in the formula obtained for the symplectic structure, the first, simplest, of the commuting Hamiltonians is seen to be a Ruijsenaars type analog of the Feher-Pusztai one.

The model described in the present work \emph{is} of Ruijsenaars form, it \emph{is} invariant under the Weyl group for the $BC_n$ root system, it \emph{is} a deformation of the standard hyperbolic Sutherland model, and it deserves for all these reasons to be known. Here it is put forward as being moreover an interesting application of Poisson Lie group reduction.

\bigskip\noindent\textbf{Acknowledgement.}
The contribution of Laszlo Feher is gratefully acknowledged. This work represents a regrettably aborted collaboration, begun several years ago. The idea of generalising the KKS approach from cotangent bundle to Heisenberg double was explained to me by Feher, and I am grateful to him for his generous and expert advice. 

\section{The Heisenberg double of $SU(n,n)$}\label{firstmainsec}

Canonical symmetry reduction will be applied to the Heisenberg double of the Poisson Lie group $SU(n,n)$, so it is worthwhile to start off by describing what this space looks like. 
At the same time, some notation is established for use in the rest of the paper:

Let $I_{nn}$ be the $2n\times2n$ matrix $\left(\begin{matrix} \mathbf{I_n}&\0\\ \mathbf{0}&-\mathbf{I_n}\end{matrix}\right)$.
 
$G$ will denote $SL(2n,\CC)$.

$K$ denotes $SU(n,n)=\{g\in SL(2n,\CC)\ \vert\  g^\dagger I_{nn}g=I_{nn}\}$.

$K_+:=\left\{\left(\begin{matrix} p&\0\\ \0&q \end{matrix}\right)\right\} = K\cap SU(2n)$.

\medskip
$B$ denotes the set of all upper triangular matrices in $SL(2n,\CC)$ with real, positive
diagonal entries, and $B_n$ denotes the same set in $GL(n,\CC)$.

\medskip

$\TT$ denotes the diagonal subgroup in $U(n)$.\\

As a vector space, the Lie algebra $\g=Lie(G)$ can be decomposed as the sum $\g=\k+\b$ of the two subalgebras $\k=Lie(K)$ and $\b=Lie(B)$, with respect to which the projections $P_\k:\g\rightarrow\k$ and $P_\b:\g\rightarrow\b$ are well-defined. Let $\langle\ ,\ \rangle:\g\times\g\rightarrow\RR$ denote the non-degenerate, invariant inner product defined by
\[
\langle X, Y\rangle = \im\tr XY.
\]
Then $R:=P_\k-P_\b$ defines a classical r-matrix on $\g$, skew-symmetric with respect to 
$\langle\ ,\ \rangle$. For any function $F\in C^\infty(G)$, the left- and right-derivatives, $D^{l,r}F:G\rightarrow\g \sim\g^*$, of $F$ are defined by
\[
\dt F(e^{tX}ge^{tY}) = \langle D^lF(g),X\rangle + \langle D^rF(g), Y\rangle\qquad \forall X,Y\in\g.
\]
The Poisson structure on $G$, as the Heisenberg double based on the bi-algebra $\g=\k+\b$, is defined by 
\be\label{toppb}
\{F,H\} = \langle D^lF, R(D^lH)\rangle + \langle D^rF, R(D^rH)\rangle.
\ee
Let $c_1, c_2\in G$ and define the subspace $M(c_1,c_2)\subset G$ by
\[
M(c_1,c_2)= \{b c_1 k \ |\  b\in B, k\in K\} \cap \{k c_2 b\ |\ b\in B, k\in K\}.
\]
Introduce ``coordinates'' $(b_L,k_L,b_R,k_R)$ on $M(c_1,c_2)$ (which of course are not independent) by
\[
M(c_1,c_2)\owns g = b_Lc_1k_R = k_Lc_2b_R.
\]

\begin{proposition}\emph{(Alekseev and Malkin \cite{AleksMalk})}
$M(c_1,c_2)$ is a symplectic leaf, and all symplectic leaves are of this form.
The symplectic structure on $M(c_1,c_2)$ can be written in the form
\be\label{heisensymp}
[Symp](g) = \langle db_Lb_L^{-1}\,\overset{\wedge},\, dk_Lk_L^{-1}\rangle
+ \langle b_R^{-1}db_R\,\overset{\wedge},\, k_R^{-1}dk_R\rangle.
\ee
\end{proposition}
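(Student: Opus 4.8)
The plan is to treat the two assertions of the proposition separately: first that the symplectic leaves of the bracket \eqref{toppb} are exactly the double coset intersections $M(c_1,c_2)=Bc_1K\cap Kc_2B$, and then that the induced symplectic form is the one displayed in \eqref{heisensymp}. For the first part I would compute the characteristic distribution of the Poisson tensor directly; for the second I would verify that the two-form \eqref{heisensymp} inverts that tensor — i.e. that it reproduces the Hamiltonian vector fields coming from \eqref{toppb} — rather than attempt to invert the bivector by brute force.

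First I would read off the Hamiltonian vector field of $F$ from \eqref{toppb}. Using the skew-symmetry of $R$ with respect to $\langle\ ,\ \rangle$ together with the relation $D^rF=\mathrm{Ad}_{g^{-1}}D^lF$, one finds that $X_F$ generates the flow $g\mapsto e^{-tRD^lF}\,g\,e^{-tRD^rF}$. Hence, in the left trivialisation, the tangent space to the leaf through $g$ is the image of $R+\mathrm{Ad}_{g^{-1}}R\,\mathrm{Ad}_g$. Writing $R=P_\k-P_\b$ and setting $Q_\k=\mathrm{Ad}_{g^{-1}}P_\k\,\mathrm{Ad}_g$, a short manipulation gives $R+\mathrm{Ad}_{g^{-1}}R\,\mathrm{Ad}_g=2(Q_\k-P_\b)=2(P_\k-Q_\b)$, so every vector in the image lies in $(\mathrm{Ad}_{g^{-1}}\k+\b)\cap(\k+\mathrm{Ad}_{g^{-1}}\b)$. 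This is precisely the left trivialisation of $T_g\big(Kc_2B\big)\cap T_g\big(Bc_1K\big)$, since $Bc_1K$ is stable under left translation by $B$ and right translation by $K$, and $Kc_2B$ under the opposite pair. A rank count, using that $\k$ and $\b$ are complementary in $\g=\k\oplus\b$, then upgrades this inclusion to an equality. Since the intersections $M(c_1,c_2)$ partition $G$ and the characteristic distribution is everywhere tangent to them and of the matching dimension, the symplectic foliation theorem identifies the leaves with the connected components of the $M(c_1,c_2)$, settling the first two claims.

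For the symplectic form I would use the redundant coordinates $(b_L,k_L,b_R,k_R)$ defined by $g=b_Lc_1k_R=k_Lc_2b_R$ and check that $\omega$ as in \eqref{heisensymp} satisfies $\iota_{X_F}\omega=dF$ for every $F$. Differentiating the two factorisations of $g$ produces linear relations among the logarithmic derivatives $db_Lb_L^{-1}\in\b$, $dk_Lk_L^{-1}\in\k$, $b_R^{-1}db_R\in\b$, $k_R^{-1}dk_R\in\k$; because $\g=\k\oplus\b$ these relations can be solved to express each factor variation in terms of the flow generators $\xi=-RD^lF$ and $\eta=-RD^rF$ found above. Substituting into \eqref{heisensymp} and simplifying with the $\mathrm{Ad}$-invariance of $\langle\ ,\ \rangle$, the contraction $\iota_{X_F}\omega$ should collapse to $\langle D^lF,\,\cdot\,\rangle=dF$, which identifies $\omega$ as the inverse of the Poisson tensor and hence as the symplectic form on the leaf.

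The main obstacle is the bookkeeping in this last step. The four factor variations are constrained rather than free, and disentangling them requires repeated use of the splitting $\g=\k\oplus\b$ together with the fact that both $\k$ and $\b$ are isotropic for $\langle\ ,\ \rangle$ — that is, that $(\g,\k,\b)$ is a Manin triple. This isotropy is exactly what makes the pairing $\langle db_Lb_L^{-1}\,\overset{\wedge},\,dk_Lk_Lk_L^{-1}\rangle$ non-degenerate and what forces the unwanted cross terms in $\iota_{X_F}\omega$ to cancel; verifying these cancellations, rather than the conceptual set-up, is where the real work lies.
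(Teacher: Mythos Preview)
The paper does not prove this proposition at all: it is stated with attribution to Alekseev and Malkin \cite{AleksMalk} and used as input for the rest of the article. So there is no ``paper's own proof'' to compare against; any argument you supply is automatically different from what the paper does.

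That said, your sketch is a sound outline of how the Alekseev--Malkin result is actually established. The identification of the characteristic distribution via $R+\mathrm{Ad}_{g^{-1}}R\,\mathrm{Ad}_g = 2(Q_\k-P_\b)=2(P_\k-Q_\b)$ is correct, and the interpretation of its image as the tangent space to $Bc_1K\cap Kc_2B$ is the standard one. Two remarks. First, you are right that what one really proves is that the leaves are the \emph{connected components} of the $M(c_1,c_2)$; the paper's statement elides this, but it is harmless here since only the open cell $M(\mathbf I,\mathbf I)$ is used later. Second, for the symplectic form the cleaner route --- and the one Alekseev and Malkin actually take --- is not to contract \eqref{heisensymp} against a generic $X_F$ and chase cancellations, but to pull back along the two factorisation maps and recognise each summand as (the pullback of) a canonical pairing on $K\times B$; closedness and non-degeneracy then follow from the Manin-triple structure almost for free. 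Your proposed verification $\iota_{X_F}\omega=dF$ will also work, but as you yourself note, the bookkeeping is where all the effort goes, and you have not carried it out. (Minor typo: in your last paragraph $dk_Lk_Lk_L^{-1}$ should read $dk_Lk_L^{-1}$.)
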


The inner product $\langle\ ,\ \rangle$ has been used so far to identify $\g^*$ with $\g$, and we may also use it to identify  $\k^*\sim\b^\perp\subset\g$ and $\b^*\sim\k^\perp\subset\g$; but for the same reason that $R$ is skew with respect to $\langle\ ,\ \rangle$, $\b^\perp=\b$ and $\k^\perp=\k$. The natural Poisson structures on $K$ and $B$ compatible with the Heisenberg double structure on $G$, defined above, are given by
\be\label{kpb}
\{\varphi,\psi\}(k) = \langle D^l\varphi(k)\, ,\, kD^r\psi(k)k^{-1}\rangle\qquad \phi,\psi\in C^\infty(K),\quad
D^{l,r}\varphi,\, D^{l,r}\psi \in \b\sim\k^*
\ee
and
\be\label{bpb}
\{\hat\varphi,\hat\psi\}(b) = \langle D^l\hat\varphi(b)\, ,\, bD^r\hat\psi(b)b^{-1}\rangle\qquad\hat\varphi,\hat\psi\in C^\infty(B),
\quad
D^{l,r}\hat\varphi,\, D^{l,r}\hat\psi \in \k\sim\b^*.
\ee
The meaning of (\ref{kpb}) and (\ref{bpb}) should be clear, but they are not necessary for the rest of the article, and are only included here for completeness. For a full account of the technology of Poisson Lie groups, the review by Reyman and Semenov-Tian-Shansky \cite{r_sts} \refcom is recommended: if the reader prefers one of the host of alternative texts to that one, it is readily and favourably endorsed here. Suffice it to say that $K$ and $B$ are Poisson Lie groups which act in a natural way on the \emph{Poisson space} $G$ (whose \emph{Lie group} structure is suppressed) and that one can reduce with respect to these actions by applying symplectic reduction (to the symplectic leaves).

\begin{proposition}\label{commies}
The functions $\Phi_\nu\in C^\infty(G)$ defined by
\[
\Phi_\nu(g) = - \frac{1}{2\nu}\tr\bigl(gI_{n,n}g^\dagger I_{n,n}\bigr)^\nu \qquad \nu=1,2,\dots
\]
are in involution with one another, with respect to the Poisson bracket (\ref{toppb}).
\end{proposition}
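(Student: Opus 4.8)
The plan is to reduce the statement to a computation of the left- and right-derivatives of $\Phi_\nu$ and then to recognise that \emph{both} gradients land in the isotropic subalgebra $\k$, on which the $r$-matrix $R=P_\k-P_\b$ acts as the identity. Write $\ell=\ell(g)=gI_{nn}g^\dagger I_{nn}$, so that $\Phi_\nu(g)=-\tfrac{1}{2\nu}\tr\ell^\nu$. First I would differentiate $t\mapsto\Phi_\nu(e^{tX}ge^{tY})$, using $(e^{tX}ge^{tY})^\dagger=e^{tY^\dagger}g^\dagger e^{tX^\dagger}$ and cyclicity of the trace. Setting $Y=0$ and using $I_{nn}^2=\mathbf 1$, $I_{nn}^\dagger=I_{nn}$ together with the identity $I_{nn}\ell^\nu I_{nn}=(\ell^\nu)^\dagger$ (which follows from $\ell^\dagger=I_{nn}\ell I_{nn}$), the two resulting terms are complex conjugates of one another, so the $X$-part combines into $-\operatorname{Re}\tr(\ell^\nu X)=\im\tr(-\ri\,\ell^\nu X)$, giving $D^l\Phi_\nu=-\ri\,\ell^\nu$. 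Setting $X=0$ and using $I_{nn}g^\dagger I_{nn}=g^{-1}\ell$ yields, after the same manipulation, $D^r\Phi_\nu=-\ri\,\tilde\ell^\nu$ with $\tilde\ell:=g^{-1}\ell g=I_{nn}g^\dagger I_{nn}g$. (Both gradients are determined only up to a multiple of the identity, which is immaterial below since it pairs to zero with traceless $X,Y$.)

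The decisive observation is that both gradients lie in $\k=su(n,n)=\{Z\in\g:Z^\dagger I_{nn}+I_{nn}Z=0\}$. Indeed $\ell^\dagger=I_{nn}\ell I_{nn}$ and, by the identical one-line check, $\tilde\ell^\dagger=I_{nn}\tilde\ell I_{nn}$; hence for $Z=-\ri\,\ell^\nu$ (or $Z=-\ri\,\tilde\ell^\nu$) one finds $Z^\dagger I_{nn}+I_{nn}Z=0$, which is exactly the defining reality condition of $\k$. Moreover $\tr\ell^\nu$ is real, since $\overline{\tr\ell^\nu}=\tr(\ell^\nu)^\dagger=\tr(I_{nn}\ell^\nu I_{nn})=\tr\ell^\nu$; so the scalar ambiguity is a purely imaginary multiple of the identity, which itself satisfies the reality condition, and subtracting it to enforce tracelessness keeps the gradient inside $\k$. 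Thus $D^l\Phi_\nu,\,D^r\Phi_\nu\in\k$ for every $\nu$. It is worth stressing that $\mathrm{Ad}_{g^{-1}}$ does not preserve $\k$ for general $g\in G$, so the membership $D^r\Phi_\nu=\mathrm{Ad}_{g^{-1}}D^l\Phi_\nu\in\k$ is not automatic; it is forced by the parallel $I_{nn}$-reality of $\tilde\ell$.

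The conclusion is then immediate from the Manin-triple structure. On $\k$ one has $R=P_\k-P_\b=\mathrm{id}$, so $R(D^l\Phi_\nu)=D^l\Phi_\nu$ and $R(D^r\Phi_\nu)=D^r\Phi_\nu$; and $\k$ is isotropic for $\langle\ ,\ \rangle$, because for $Z,W\in\k$ the quantity $\tr ZW$ is real and hence $\im\tr ZW=0$. Both terms of (\ref{toppb}) therefore vanish:
\[
\{\Phi_\mu,\Phi_\nu\}=\langle D^l\Phi_\mu,\,R\,D^l\Phi_\nu\rangle+\langle D^r\Phi_\mu,\,R\,D^r\Phi_\nu\rangle=\langle D^l\Phi_\mu,\,D^l\Phi_\nu\rangle+\langle D^r\Phi_\mu,\,D^r\Phi_\nu\rangle=0 .
\]
The hard part is entirely in the first stage: organising the derivative calculation so that the $X$- and $Y$-parts can be read off cleanly, and spotting the $I_{nn}$-reality $\ell^\dagger=I_{nn}\ell I_{nn}$ that pins the gradients into $\k$. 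Once that is in hand, involutivity needs nothing further about the family beyond membership in the isotropic $\k$; in particular I would not need the $\ell^\nu$ to commute among themselves, nor any explicit spectral description of $\ell$.
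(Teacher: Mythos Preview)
Your proof is correct and follows exactly the route the paper takes: the paper's entire argument is the one-line observation that $D^l\Phi_\nu$ and $D^r\Phi_\nu$ both lie in $\k$, after which involutivity is immediate from $\k^\perp=\k$ (equivalently, from $R\vert_\k=\mathrm{id}$ together with the isotropy of $\k$). You have simply made explicit the derivative computation and the $I_{nn}$-reality check that the paper leaves to the reader.
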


\begin{proof}
For all $\nu$, $D^l\Phi_\nu$ and $D^r\Phi_\nu$ are both in $\k$, and the result is a direct consequence of this fact.
\end{proof}

\begin{proposition} 
On $G$, the Hamiltonian vector field for $\Phi_1$ is integrated explicitly to give
\be\label{solhamvec1}
g(t) = g(0)\exp\bigl[-2itI_{n,n}g(0)^\dagger I_{n,n}g(0)\bigr]
\ee
\end{proposition}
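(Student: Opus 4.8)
The plan is to read off the Hamiltonian vector field of $\Phi_1$ from the bracket \eqref{toppb} and then to integrate the resulting matrix ODE by exploiting a conserved quantity. Throughout write $J=I_{n,n}$, so that $J^\dagger=J$, $J^2=\1$ and $\Phi_1(g)=-\half\tr(gJg^\dagger J)$. The first and principal task is to compute the derivatives $D^l\Phi_1$ and $D^r\Phi_1$. Differentiating $\Phi_1(e^{sX}ge^{sY})$ at $s=0$ produces four trace terms; collecting the two linear in $X$ and using the identities $(gJg^\dagger J)^\dagger=JgJg^\dagger$ and $\overline{\tr M}=\tr M^\dagger$, the $X$-part reduces to $-\mathrm{Re}\,\tr\bigl(X\,gJg^\dagger J\bigr)=\im\tr\bigl(X(-\ri\,gJg^\dagger J)\bigr)$. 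Comparing with $\langle\,\cdot\,,X\rangle=\im\tr(\,\cdot\,X)$, and treating the $Y$-terms the same way, gives, modulo a central term that pairs trivially with $\g$,
\be\label{Dderivs}
D^l\Phi_1(g)=-\ri\,gJg^\dagger J,\qquad D^r\Phi_1(g)=-\ri\,Jg^\dagger Jg .
\ee
A one-line check that $(D^l\Phi_1)^\dagger=-J(D^l\Phi_1)J$ recovers the membership $D^{l,r}\Phi_1\in\k$ already used in Proposition \ref{commies}.

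Next I would assemble the vector field. Since $R=P_\k-P_\b$ acts as the identity on $\k$ and both derivatives in \eqref{Dderivs} lie in $\k$, the bracket of $\Phi_1$ with an arbitrary $F$ collapses to $\{F,\Phi_1\}=\langle D^lF,D^l\Phi_1\rangle+\langle D^rF,D^r\Phi_1\rangle$. Writing the Hamiltonian vector field in right-trivialised form $\dot g\,g^{-1}=v$ and using the relation $D^lF=g\,(D^rF)\,g^{-1}$ between the two derivatives of a single function, one identifies $v=D^l\Phi_1+g\,(D^r\Phi_1)\,g^{-1}$, that is $\dot g=(D^l\Phi_1)\,g+g\,(D^r\Phi_1)$. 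Substituting \eqref{Dderivs} and using $J^2=\1$, the two contributions coincide, and the flow is governed by
\be\label{flowode}
\dot g=-2\ri\,gJg^\dagger Jg .
\ee

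Finally I would integrate \eqref{flowode}. The key observation is that $g^\dagger Jg$ is a constant of motion: differentiating along \eqref{flowode}, with $\dot g^\dagger=2\ri\,g^\dagger JgJg^\dagger$, the two resulting terms cancel and $\tfrac{d}{dt}(g^\dagger Jg)=0$. Hence $Jg^\dagger Jg$ is frozen at its initial value $Jg(0)^\dagger Jg(0)$, and \eqref{flowode} becomes the constant-coefficient, right-invariant linear equation $\dot g=g\cdot\bigl(-2\ri\,Jg(0)^\dagger Jg(0)\bigr)$, whose solution through $g(0)$ is precisely \eqref{solhamvec1}. Equivalently, one substitutes the claimed $g(t)$ directly into \eqref{flowode}; the only substantive point is again the invariance of $g^\dagger Jg$, which also follows from the elementary power-series identity $(Q_0J)^kQ_0=Q_0(JQ_0)^k$ with $Q_0=g(0)^\dagger Jg(0)$.

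I expect the main obstacle to be purely computational: correctly carrying the adjoints through the differentiation of $\Phi_1$ and passing from $\tr$ to the real pairing $\im\tr$ without losing signs, in particular producing the factor $-2\ri$ in \eqref{flowode}. Once the vector field \eqref{flowode} is in hand, the presence of the conserved matrix $g^\dagger Jg$ makes the integration immediate.
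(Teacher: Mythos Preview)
Your argument is correct and matches the paper's proof: you compute the same left and right derivatives, obtain the same flow equation $\dot g=-2\ri\,gJg^\dagger Jg$, and integrate it via the conserved quantity $g^\dagger Jg$ (the paper phrases this as constancy of $gJg^\dagger J$, which is the same observation viewed from the left). The only cosmetic difference is that you spell out the passage from the bracket to $\dot g=(D^l\Phi_1)g+g(D^r\Phi_1)$ in more detail than the paper does.
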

\begin{proof} 
The left- and right-derivatives of $\Phi_1$ are computed to be 
$D^l\Phi_1(g) = -igI_{n,n}g^\dagger I_{n,n}$ and $D^r\Phi_1(g)=- iI_{n,n}g^\dagger I_{n,n}g$, with the result that the Hamiltonian vector-field can be written as
\be\label{hamvect1}
\dot g = -2g[iI_{n,n}g^\dagger I_{n,n}g] = -2[igI_{n,n}g^\dagger I_{n,n}] g = -2igI_{n,n}g^\dagger I_{n,n}g,
\ee
from which it follows that
\(
[igI_{n,n}g^\dagger I_{n,n}]
\)
is constant, and hence the result.
\end{proof}

Moreover, it follows from the fact that (\ref{hamvect1}) is a Hamiltonian vector-field, that it is tangent to symplectic leaves, and therefore that its integration to the curve (\ref{solhamvec1}), defined in the Proposition, lies entirely within one symplectic leaf. That is, $g(0)\in M(c_1,c_2)\Rightarrow g(t)\in M(c_1,c_2)\ \forall t$.

\section{The reduced space}
 In this section we shall explain the implementation of a reduction argument analogous to that of Kazhdan-Kostant-Sternberg (to which the abbreviation ``KKS'' will refer), adapted to the Heisenberg double. Specifically, what follows is the PLG version of a result of Feher and Pusztai, see \cite{FehPus}. 
 \refcom
 
Any element of $K$ may be written in the form
\be\label{diagk}
\left(\begin{matrix}\rho&\0\\ \0&\tau \end{matrix}\right)
\left(\begin{matrix} \Gamma&\Sigma\\ \Sigma &\Gamma \end{matrix}\right)
\left(\begin{matrix} k&\0\\ \0& l \end{matrix}\right) ,
\ee
with $\rho,\tau,k,l\in U(n)$, with $\Gamma=\cosh\Delta, \Sigma=\sinh\Delta$, and with $\Delta$ diagonal and real. Let us define the open subset $\check{K}$ of $K$ to be the set of all ``regular'' elements of $K$, being those which can be written in the above form, with the matrix $\begin{pmatrix}\0&\Delta\\ \Delta&\0\end{pmatrix}$ in the interior of a particular Weyl chamber within the Cartan subalgebra. That is, 
\be\label{Weylchambercondition}
\Delta= diag(\Delta_1,\dots,\Delta_n)\quad \hbox{with} \quad
\Delta_1 > \Delta_2 > \dots > \Delta_n.
\ee 
\textbf{Remark.} $\rho,\tau, k, l$ are only defined up to the action of $\TT$, given by $\delta\cdot(\rho,\tau, k, l) = (\rho\delta,\tau\delta, \delta^\dagger k, \delta^\dagger l)$. Restriction to $\check{K}$ is acceptable for the reason that we shall be treating such matrices as defining points in a Poisson space, on which a group action is defined, and with respect to which a reduction procedure may be applied: we shall \emph{not} be using all of their properties as members of a group, as we do not need to apply the group operation between regular elements.

\medskip
\noindent\textbf{Remark:} It is standard in discussions of Calogero type systems, to restrict to an open region of some fixed Weyl chamber, to avoid ``collisions of particles'', and this is reasonable as the forbidden points of the space correspond to poles of the potential.

\medskip

We can write any element of $B$ in the block-form,
$$
\left(\begin{matrix}\sigma&\nu\\ \mathbf{0}&\pi \end{matrix}\right),\qquad \sigma, \pi\in B_n,\ \nu\in gl(n,\CC).
$$

We restrict to the symplectic leaf $M(\mathbf{I},\mathbf{I})$; that is, to elements of $G=SL(2n,\CC)$ which may be written in the form
\be\label{leaf}
g=k_Lb_R=b_Lk_R\qquad\hbox{ with }\ k_{L,R}\in K,\ b_{L,R}\in B.
\ee
To be precise, we also restrict $k_L$ to be in $\check K$.

\subsection{The Constraints} By fixing $\sigma\in B_n$ and $x,y\in\RR_+$, constraints are imposed as follows: suppose that when written in the form $G\owns g=k_Lb_R$,
$b_R=\left(\begin{matrix}x\mathbf{I}&\omega\\ \0&x^{-1}\mathbf{I}  \end{matrix}\right)$ 
and that, when written in the form $g=b_Lk_R$, 
$b_L=\left(\begin{matrix} y^{-1}\sigma&y^{-1}\nu\\ \0&y\mathbf{I} \end{matrix}\right)$, with $\det(\sigma)=1$, and with both $\omega$ and $\nu$ undetermined in $gl(n)$. 
To start with $\sigma$ is just some fixed element in $B_n$, but it will shortly be specified to be the appropriate Poisson Lie Group analogue of the KKS element. These constraints are chosen in such a way that we may factor on the right of $G$ by $K_+$ and on the left by a big strict subgroup of $K_+$. However, as the map $g\mapsto b_R$ generates the right-action of $K$ on $G$ and the map $g\mapsto b_L$ generates the left-action of $K$ on $G$, both actions being Poisson, the fixing of the block-diagonal parts of $b_L$ and $b_R$, followed by projections to equivalence classes defined by the residual actions of the  isotropy subgroups of $K$, is sure to result in a reduced Poisson structure on the quotient space. 
The key property which ensures that this will work is that the left- and right-actions of $K_+$ on $G$ are \emph{admissible}. This property is explained in \cite{STSrims}, where it is shown to depend on a remarkably simple condition on the symmetry group: namely that the action of a subgroup of a Poisson Lie group is admissible if and only if the annihilator of its Lie algebra is a subalgebra in the dual Lie algebra. In the present context, we easily check that this condition is satisfied; indeed it says that 
$
\left[\begin{pmatrix}{\bf0}&*\\ {\bf0}&{\bf0}\end{pmatrix} , \begin{pmatrix}{\bf0}&*\\ {\bf0}&{\bf0}\end{pmatrix}\right] = \begin{pmatrix}{\bf0}&*\\ {\bf0}&{\bf0}\end{pmatrix}
$.
(I call this condition remarkable, for the reason that it does not depend on the space on which the group acts.)

As the family of functions defined in Proposition \ref{commies} are all invariant with respect to the left- and right-actions of $K_+$, they descend to functions on the reduced space, where their commuting property is preserved. The remaining sections amount to making a choice of local coordinates on the reduced space and seeing what the representatives of the commuting Hamiltonians $\Phi_k$ look like in these local coordinates.

\section{Computation of the symplectic structure on the reduced space}\label{reduction}

We'd like to obtain local coordinates on the reduced space and an expression for the reduced symplectic structure in terms of those coordinates. It follows from $g=b_Lk_R$, that
\be\label{ggdag}
gI_{nn}g^\dagger
=
b_LI_{nn} b_L^\dagger
=
\left(\begin{matrix}  y^{-1}\sigma&-y^{-1}\nu\\ \0&-y\mathbf{I}\end{matrix}\right)
\left(\begin{matrix} y^{-1}\sigma^\dagger&\0\\ y^{-1}\nu^\dagger&y\mathbf{I} \end{matrix}\right)
=
\left(\begin{matrix} y^{-2}(\sigma\sigma^\dagger - \nu\nu^\dagger) & -\nu\\ -\nu^\dagger&-y^2\mathbf{I} \end{matrix}\right)
\ee
After factoring on the right by $K_+$ and on the left by the subgroup $\left\{\left(\begin{matrix} \mathbf{I}&\0\\ \0&p \end{matrix}\right)\right\}\subset K_+$ it may be assumed that $g$ is in one of the two gauges:
$$
g=\left(\begin{matrix} \rho\Gamma&\rho\Sigma\\ \Sigma&\Gamma\end{matrix}\right)
\left(\begin{matrix} x\mathbf{I}&\omega\\ \0&x^{-1}\mathbf{I} \end{matrix}\right)
\quad\hbox{ with $\omega\in gl(n,\CC)$, $\rho\in SU(n)$}
$$
or
$$
g=\left(\begin{matrix} \rho\Gamma&\rho\Sigma\\ \Sigma&\Gamma\end{matrix}\right)
\left(\begin{matrix} k&\0\\ \0&l\end{matrix}\right)
\left(\begin{matrix} x\mathbf{I}&\omega\\ \0&x^{-1}\mathbf{I} \end{matrix}\right)
\quad\hbox{ with $\omega$ diagonal, real, positive and $\rho,k,l\in U(n)$.}
$$
In these two formulae $\Gamma$ and $\Sigma$ are as they were in (\ref{diagk}).

\noindent\textbf{Remark:} We have factored on the left by a subgroup of $K_+$. Later we shall factor by the remaining gauge freedom in $K_+$. More will be said about this later. Moreover, although we think of $\rho$ as living in $U(n)$, it is only defined up to right-multiplication by matrices in $\TT$ and this freedom will be exploited later on.
 
\bigskip
\noindent\textbf{In the first gauge}\footnote{The author expects that the second gauge will yield the dual system upon completion of all the reduction procedure, but this speculation has not been confirmed, and can be read as equivalent to a wild guess.}
\ $\omega$ is generic and we have
$$
g= \left(\begin{matrix} \rho\Gamma&\rho\Sigma\\ \Sigma&\Gamma\end{matrix}\right)
\left(\begin{matrix} x\mathbf{I}&\omega\\ \0&x^{-1}\mathbf{I}\end{matrix}\right)
=
\left(\begin{matrix} x\rho\Gamma& \rho\Gamma\omega+x^{-1}\rho\Sigma\\ 
x\Sigma& \Sigma\omega+x^{-1}\Gamma\end{matrix}\right).
$$
Let us make the substitution $\Omega=\Sigma\omega+x^{-1}\Gamma$, so that
\be\label{firstgauge}
g
=
\left(\begin{matrix} x\rho\Gamma& \rho\Sigma^{-1}(\Gamma\Omega-x^{-1}\mathbf{I})\\ 
x\Sigma& \Omega\end{matrix}\right).
\ee
As $\omega$ was generic, so is $\Omega$, at this stage, a generic element in $gl(n,\CC)$.

\subsection{Solving the constraint condition}

\medskip\noindent
From (\ref{firstgauge}) we get
$$
gI_{nn}g^\dagger
=
\left(\begin{matrix}
x^2\rho\Gamma^2\rho^\dagger 
- \rho\Sigma^{-1}(\Gamma\Omega-x^{-1}\mathbf{I})
(\Omega^\dagger\Gamma-x^{-1}\mathbf{I})\Sigma^{-1}\rho^\dagger
&\ \ \ \ 
x^2\rho\Sigma\Gamma - \rho\Sigma^{-1}(\Gamma\Omega-x^{-1}\mathbf{I})\Omega^\dagger
\\
x^2\Sigma\Gamma\rho^\dagger 
- \Omega(\Omega^\dagger\Gamma-x^{-1}\mathbf{I})\Sigma^{-1}\rho^\dagger
&
x^2\Sigma^2 - \Omega\Omega^\dagger
\end{matrix}\right),
$$
and imposing the constraint by comparing this with (\ref{ggdag}), we get
\be\label{transomtok}
\left\{{
\begin{array}{lc}
&\Omega\Omega^\dagger= y^2\mathbf{I} + x^2\Sigma^2 = \Lambda^2,\ \hbox{ with }\ 
\Lambda= diag\bigl([y^2+x^2\sinh^2\Delta_i]^{1/2}\bigr),\\
&\\
&\nu=\rho\Sigma^{-1}(y^2\Gamma-x^{-1}\Omega^\dagger),\\
&\\
&x^2\rho\Gamma^2\rho^\dagger - \rho\Sigma^{-1}(\Gamma\Omega
-
x^{-1}\mathbf{I})(\Omega^\dagger\Gamma-x^{-1}\mathbf{I})\Sigma^{-1}\rho^\dagger 
+ y^{-2}\nu\nu^\dagger 
= 
y^{-2}\sigma\sigma^\dagger,
\end{array}}\right.
\ee
from which we deduce that $\Lambda^{-1}\Omega\in U(n)$, or in other words
\be\label{defT}
\Omega=\Lambda T,\quad \hbox{ with }\ T\in U(n),
\ee
and thence
$$
\rho\Sigma^{-1}T^\dagger\Sigma^2T\Sigma^{-1}\rho^\dagger
=
\sigma\sigma^\dagger.
$$

\noindent or
\be\label{finalform}
T^\dagger\Sigma^2T=\Sigma\rho^\dagger\sigma\sigma^\dagger\rho\Sigma.
\ee
(\ref{finalform}) is viewed as being a constraint condition and will be subject to a detailed analysis in Section \ref{KKS}. For the time being $\sigma$ remains as just some constant matrix in $B_n$.

\bigskip\noindent
Having solved for $g$ in (\ref{ggdag}), it follows that $k_R\in K$, defined by $k_R=b_L^{-1}g$, will be in $SU(n,n)$; that is, it satisfies $k_R^\dag I_{n,n}k_R=I_{n,n}$.

We have then, $b_Lk_R=g=k_Lb_R$, with
\be\label{reducedspace}
\ba
&b_L=\left(\begin{matrix} y^{-1}\sigma&y^{-1}\nu\\ \0&y\mathbf{I} \end{matrix}\right)\qquad
\hbox{and}\qquad
\nu=\rho\Sigma^{-1}(y^2\Gamma-x^{-1}\Omega^\dagger),\\
&\\
&b_R=\left(\begin{matrix} x\mathbf{I}&\omega\\ \0&x^{-1}\mathbf{I}\end{matrix}\right)\qquad
\hbox{and}\qquad
 \omega=\Sigma^{-1}(\Omega-x^{-1}\Gamma),\\
&\\
&k_L= \left(\begin{matrix} \rho\Gamma&\rho\Sigma\\ \Sigma&\Gamma\end{matrix}\right),\qquad
\\
&\\
&k_R=b_L^{-1}g=
y^{-1}
\begin{pmatrix}\sigma^{-1}&0\\0&{\mathbf I}\end{pmatrix}
\begin{pmatrix}\rho\Sigma^{-1}&0\\0&{\mathbf I}\end{pmatrix}
\begin{pmatrix}T^\dagger&0\\0&{\mathbf I}\end{pmatrix}
\begin{pmatrix}\Lambda&x\Sigma^2\\ x{\mathbf I}&\Lambda\end{pmatrix}
\begin{pmatrix}\Sigma&0\\0&T\end{pmatrix}.
\ea\ee
We may now use these together with (\ref{heisensymp}) to compute the symplectic structure on the reduced space. We have
\[
db_Lb_L^{-1}=\begin{pmatrix}0&y^{-2}d\nu\\ 0&0\end{pmatrix} 
=\begin{pmatrix}0&y^{-2}d\bigl[\rho\Sigma^{-1}(y^2\Gamma-x^{-1}\Omega^\dagger)\bigr]\\ 0&0\end{pmatrix},
\]
\[
b_R^{-1}db_R = \begin{pmatrix}0 &x^{-1}d\omega\\ 0&0\end{pmatrix}
=\begin{pmatrix}0&x^{-1}d\bigl[\Sigma^{-1}(\Omega-x^{-1}\Gamma)\bigr]\\ 0&0\end{pmatrix},
\]
\[
dk_Lk_L^{-1}=\begin{pmatrix}d\rho\rho^\dagger&\rho d\Delta\\ d\Delta\rho^\dagger&0\end{pmatrix}
\]
\[
\ba
k_R^{-1}dk_R&=
\begin{pmatrix}\Sigma^{-1}d\Sigma&0\\ 0&T^\dagger dT\end{pmatrix}
+
y^{-2}\begin{pmatrix}\Lambda d\Lambda & 2x\Lambda d\Sigma T\\
-xT^\dagger d\Lambda\Sigma & -2x^2T^\dagger\Sigma d\Sigma T + T^\dagger\Lambda d\Lambda T\end{pmatrix}\\
&\quad+
y^{-2}\begin{pmatrix} -\Sigma^{-1}\Lambda dTT^\dagger\Lambda\Sigma & -x\Sigma^{-1}\Lambda dTT^\dagger\Sigma^2T\\
xT^\dagger dTT^\dagger\Lambda\Sigma & x^2T^\dagger dTT^\dagger\Sigma^2T\end{pmatrix}\\
&\quad\quad+
y^{-2}\begin{pmatrix}
\Sigma^{-1}\Lambda T\Sigma\rho^\dagger d(\rho\Sigma^{-1})T^\dagger\Lambda\Sigma
&
x\Sigma^{-1}\Lambda T\Sigma\rho^\dagger d(\rho\Sigma^{-1})T^\dagger\Sigma^2T\\
-x\Sigma\rho^\dagger d(\rho\Sigma^{-1})T^\dagger\Lambda\Sigma
&
-x^2\Sigma\rho^\dagger d(\rho\Sigma^{-1})T^\dagger\Sigma^2T\end{pmatrix}.
\ea
\]
Thus,
\[
\ba
\langle db_Lb_L^{-1}\ \overset{\wedge},\ dk_Lk_L^{-1}\rangle
&=
y^{-2}\langle d\nu\ \overset{\wedge},\ d\Delta\rho^\dagger\rangle\\
&=
\langle\rho^\dagger d\rho\ \overset{\wedge},\ \Sigma^{-1}\Gamma d\Delta - x^{-1}y^{-2}\Sigma^{-1}T^\dagger\Lambda d\Delta\rangle
+
x^{-1}y^{-2}\langle dT\ \overset{\wedge},\ \Lambda\Sigma^{-1}d\Delta\rangle,
\ea
\]
where here several simplifications have been made: \\
(i) as the pairing between $\mathfrak b$ and $\mathfrak k$ is the imaginary part of the trace, we have $\langle A,B\rangle = -\langle A^\dagger , B^\dagger\rangle$, \\
(ii) because of the antisymmetry, denoted by wedge `$\wedge$', all terms involving adjacent diagonal differentials $d\Delta$ are zero. \\
Then,
\[
\ba
&\langle b_R^{-1}db_R\ \overset{\wedge},\ k_R^{-1}dk_R\rangle
=y^{-2}\langle d\omega\ \overset{\wedge},\ -T^\dagger\Sigma d\Lambda + T^\dagger dTT^\dagger\Lambda\Sigma - \Sigma\rho^\dagger d\rho\Sigma^{-1}T^\dagger\Lambda\Sigma + \Sigma^{-1}d\Sigma T^\dagger\Lambda\Sigma\rangle\\
&\qquad=
y^{-2}\langle\Sigma\Lambda d(\Sigma^{-1}\Lambda)\ \overset{\wedge},\ 
dTT^\dagger - T\Sigma\rho^\dagger d\rho\Sigma^{-1}T^\dagger + T\Sigma^{-1}d\Sigma T^\dagger\rangle\\
&\qquad\qquad+
x^{-1}y^{-2}\langle\Lambda\Sigma^{-1}d\Delta\ \overset{\wedge},\ 
T^\dagger dTT^\dagger + - \Sigma\rho^\dagger d\rho\Sigma^{-1}T^\dagger\rangle\\
&\qquad\qquad\quad+
y^{-2}\langle dTT^\dagger\ \overset{\wedge},\ -\Lambda d\Lambda + dTT^\dagger\Lambda^2 - T\Sigma\rho^\dagger d\rho\Sigma^{-1}T^\dagger\Lambda^2 + T\Sigma^{-1}d\Sigma T^\dagger\Lambda^2\rangle\\
&\qquad=
\langle\rho^\dagger d\rho\ \overset{\wedge},\ x^{-1}y^{-2}\Sigma^{-1}T^\dagger\Lambda d\Delta + x^2y^{-2}\Sigma^{-1}T^\dagger\Sigma^2dT\Sigma + \Sigma^{-1}T^\dagger dT\Sigma - \Sigma^{-1}T^\dagger\Sigma^{-1}d\Sigma T\Sigma\rangle\\
&+
\langle dTT^\dagger\ \overset{\wedge},\ -x^2y^{-2}\Sigma d\Sigma + \Sigma^{-1}d\Sigma + x^2y^{-2}dTT^\dagger\Sigma^2 + T\Sigma^{-1}d\Sigma T^\dagger + x^2y^{-2}T\Sigma^{-1}d\Sigma T^\dagger\Sigma^2\rangle\\
&\qquad\qquad+\langle\Sigma^{-1}d\Sigma\ \overset{\wedge},\ T^\dagger\Sigma^{-1}d\Sigma T\rangle + x^{-1}y^{-2}\langle\Lambda\Sigma^{-1}d\Delta\ \overset{\wedge},\ 
 dT\rangle.
\ea
\]
Putting these together, we have
\[
\ba
{[Symp]} &= \langle\rho^\dagger d\rho\ \overset{\wedge},\ \Sigma^{-1}d\Sigma+ \Sigma^{-1}T^\dagger dT\Sigma - \Sigma^{-1}T^\dagger\Sigma^{-1}d\Sigma T\Sigma\rangle
+\langle T^\dagger dT + dTT^\dagger\ \overset{\wedge},\ \Sigma^{-1}d\Sigma\rangle\\
&+
x^2y^{-2}\langle \rho^\dagger d\rho\ \overset{\wedge},\ (\Sigma^{-1}T^\dagger\Sigma^2T\Sigma^{-1})(\Sigma T^\dagger dT\Sigma)\rangle\\
&+
x^2y^{-2}\langle dTT^\dagger\ \overset{\wedge},\ \Sigma d\Sigma + (dTT^\dagger)(\Sigma^2)\rangle
+ 
x^2y^{-2}\langle T^\dagger dT\ \overset{\wedge},\ (\Sigma^{-1}d\Sigma)(T^\dagger\Sigma^2T)\rangle,
\ea
\]
and brackets have been inserted into the last formula to indicate how the identity 
$\langle A,B\rangle = -\langle A^\dagger , B^\dagger\rangle$ can be put to good use.
Use of the constraint (\ref{finalform}) allows the replacement, 
\[
[\Sigma^{-1}T^\dagger\Sigma^2 T\Sigma^{-1},\rho^\dagger d\rho]=[\rho^\dagger\sigma^\dagger\sigma\rho,\rho^\dagger d\rho]=  d(\rho^\dagger\sigma^\dagger\sigma\rho)=d(\Sigma^{-1}T^\dagger\Sigma^2 T\Sigma^{-1}),
\]
 and results in the mutual cancellation of all the terms multiplied by $x^2y^{-2}$. Finally everything may be collected together into the following condensed form
\be\label{nicesymp}
[Symp] = \langle\,\rho^\dagger d\rho\ \overset{\wedge},\ \Sigma^{-1}T^\dagger\Sigma\, d(\Sigma^{-1}T\Sigma)\,\rangle + \langle\, T^\dagger dT + dTT^\dagger\, \overset{\wedge},\ \Sigma^{-1}d\Sigma\,\rangle.
\ee

It is important to ensure that the above structure is invariant  with respect to replacements of $\rho$ by $R\rho$ for which $R$ leaves $\sigma$ invariant; that is $R^\dagger\sigma\sigma^\dagger R=\sigma\sigma^\dagger$.
On the one hand this invariance is guaranteed by the theory of reduction, so that there is nothing to check, but on the other hand it is a useful check that calculations so far have been carried out correctly.
Let us denote by $[Symp]_1$ the problematic term in $[Symp]$. Thus
$$
[Symp]_1 = \langle\rho^\dagger d\rho\ \overset{\wedge}, \ (\Sigma^{-1}T^\dagger\Sigma)d(\Sigma^{-1}T\Sigma)\rangle.
$$ 
The lemma which follows plays a crucial role in the efficient computation of $[Symp]$, not only because it confirms that (\ref{nicesymp}) is well-defined on the reduced space, but also as it hugely simplifies the task, by justifying the making of a simple and convenient choice for $\rho$.

\begin{lemma}\label{rhoinv}
$[Symp]_1$ is invariant with respect to replacement of $\rho$ by $R\rho$ whenever $R$ leaves $\sigma$ invariant.
\end{lemma}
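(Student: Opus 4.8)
The plan is to use that the replacement $\rho\mapsto R\rho$ (with $R^\dagger\sigma\sigma^\dagger R=\sigma\sigma^\dagger$) fixes almost all of the data entering $[Symp]_1$, and to reduce the verification to a single algebraic identity coming from the constraint (\ref{finalform}).

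First I would note that the factor $\Sigma^{-1}T^\dagger\Sigma\,d(\Sigma^{-1}T\Sigma)$ is untouched by the substitution. Since $R^\dagger\sigma\sigma^\dagger R=\sigma\sigma^\dagger$, the right-hand side $\Sigma\rho^\dagger\sigma\sigma^\dagger\rho\Sigma$ of (\ref{finalform}) is invariant, so the matrix $T$ it determines (with $\Sigma$ held fixed) may be kept fixed; equivalently $\rho\mapsto R\rho$ is induced by left multiplication of $g$ by $\diag(R,\mathbf{I})\in K_+$, which leaves the lower block-row of $g$, and hence $\Omega=\Lambda T$, alone. Only the Maurer--Cartan factor changes, $\rho^\dagger d\rho\mapsto\rho^\dagger d\rho+\rho^\dagger R^\dagger dR\,\rho$, so the entire variation of $[Symp]_1$ is the single term $\langle\beta\ \overset{\wedge},\ W^{-1}dW\rangle$, where $\beta:=\rho^\dagger R^\dagger dR\,\rho$ and $W:=\Sigma^{-1}T\Sigma$; here $W^{-1}=\Sigma^{-1}T^\dagger\Sigma$ because $T$ is unitary, so the second factor is exactly $W^{-1}dW$. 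It remains to prove this term vanishes.

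The crux is the vanishing of $\im\tr(\beta\wedge W^{-1}dW)$, and my plan rests on two points. First, $\beta$ is anti-Hermitian, so that $\im\tr(\beta\wedge W^{-1}dW)=\frac{1}{2i}\tr\bigl(\beta\wedge(W^{-1}dW+(W^{-1}dW)^\dagger)\bigr)$: only the Hermitian part of the second factor can contribute. Second---and this is the key observation---the constraint (\ref{finalform}) repackages as $W^\dagger W=(\rho^\dagger\sigma\sigma^\dagger\rho)^{-1}=:G$, a Hermitian matrix with which $\beta$ commutes, precisely because $R^\dagger dR$ (hence $\beta$ after conjugation by $\rho$) commutes with $\sigma\sigma^\dagger$. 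Writing $W^\dagger=GW^{-1}$ gives $(W^{-1}dW)^\dagger=dG\,G^{-1}-G\,W^{-1}dW\,G^{-1}$, and cyclicity of the trace together with $[\beta,G]=0$ collapses the Hermitian-part pairing to $\tr(\beta\wedge dG\,G^{-1})$.

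Finally I would evaluate $dG\,G^{-1}$ from $G=\rho^\dagger(\sigma\sigma^\dagger)^{-1}\rho$: using $d\rho^\dagger=-\rho^\dagger d\rho\,\rho^\dagger$ one gets $dG\,G^{-1}=-\alpha+G\alpha G^{-1}$ with $\alpha:=\rho^\dagger d\rho$, whereupon $[\beta,G]=0$ with cyclicity yields $\tr(\beta\wedge dG\,G^{-1})=-\tr(\beta\wedge\alpha)+\tr(\beta\wedge\alpha)=0$, completing the proof. The one genuine obstacle is spotting the repackaging $W^\dagger W=(\rho^\dagger\sigma\sigma^\dagger\rho)^{-1}$, since this is what converts the stabiliser hypothesis on $R$ into the commutation relation $[\beta,G]=0$ that powers every cancellation; once that is in hand the rest is bookkeeping with the anti-Hermiticity of $\beta$ and cyclicity of the trace.
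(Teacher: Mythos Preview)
Your argument is correct and takes a genuinely different route from the paper's.

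The paper proceeds by introducing $\tau:=\Sigma^{-1}T^\dagger\Sigma$, writing $\tau=\rho^\dagger\sigma Q$ with $Q\in U(n)$, and recasting $[Symp]_1=\langle dQQ^\dagger\,\overset{\wedge},\,\sigma^{-1}d\rho\rho^\dagger\sigma\rangle$. From $R^\dagger\sigma\sigma^\dagger R=\sigma\sigma^\dagger$ it extracts an $S\in U(n)$ with $R^\dagger\sigma=\sigma S$, observes that $\rho\mapsto R\rho$ forces $Q\mapsto S^\dagger Q$, and then verifies invariance by a direct computation at the shifted point $(R\rho,S^\dagger Q)$, repeatedly using $u(n)^\perp=u(n)$ to kill the $dRR^\dagger$ and $dSS^\dagger$ terms. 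Your approach instead keeps $T$ (hence $W=\Sigma^{-1}T\Sigma$) fixed, isolates the sole variation $\beta=\rho^\dagger R^\dagger dR\,\rho$, and reduces everything to the commutation relation $[\beta,G]=0$ coming from the repackaged constraint $W^\dagger W=(\rho^\dagger\sigma\sigma^\dagger\rho)^{-1}$. What the paper's route buys is an explicit polar-type factorisation $\tau=\rho^\dagger\sigma Q$ that may be handy elsewhere; what your route buys is economy and a cleaner structural explanation---the stabiliser hypothesis becomes a single commutator identity, and no auxiliary unitaries $Q,S$ need to be introduced or tracked. Both arguments ultimately rest on $u(n)^\perp=u(n)$ for the pairing $\langle\,,\,\rangle=\im\tr$, which you invoke through the Hermitian/anti-Hermitian splitting of $W^{-1}dW$.
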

\begin{proof}
Let $\tau:= \Sigma^{-1}T^\dagger\Sigma$. Then
\[
\tau\tau^\dagger = \Sigma^{-1}T^\dagger\Sigma^2T\Sigma^{-1}=\rho^\dagger\sigma\sigma^\dagger\rho
\]
which implies
\be\label{tau}
\tau = \rho^\dagger\sigma Q\qquad\hbox{for some}\ \ Q\in U(n).
\ee
Suppose that $R$ leaves $\sigma$ invariant; that is
\be\label{R}
\left\{{\ba
&R\in U(n)\ \hbox{and}\  R^\dagger\sigma\sigma^\dagger R=\sigma\sigma^\dagger\\
&\Rightarrow\exists S\in U(n)\ \hbox{s.t.}\  R^\dagger\sigma=\sigma S\\
&\Leftrightarrow \qquad \sigma^{-1}R=S^\dagger\sigma^{-1}\\
&\Leftrightarrow S\sigma^{-1}=\sigma^{-1}R^\dagger\ \ \Leftrightarrow R\sigma=\sigma S^\dagger.
\ea}\right.
\ee
Now, rewriting $[Symp]_1$ in terms of $\rho$ and $Q$ instead of in terms of $\rho$ and $\tau$, we have
\[
[Symp]_1=\langle dQQ^\dagger\ \overset{\wedge},\ \sigma^{-1}d\rho\rho^\dagger\sigma\rangle.
\]
Suppose that we replace $\rho$ by $R\rho$, with $R\in U(n)$ s.t. $R^\dagger\sigma\sigma^\dagger R=\sigma\sigma^\dagger$. We have
\[
\tau\tau^\dagger = \rho^\dagger\sigma\sigma^\dagger\rho = \rho^\dagger R^\dagger\sigma\sigma^\dagger R\rho
\]
but now combining (\ref{tau}) and (\ref{R}),
\[
\rho^\dagger R^\dagger\sigma = \rho^\dagger\sigma S = \tau Q^\dagger S,
\quad\hbox{ or }\quad
\tau = \rho^\dagger R^\dagger\sigma S^\dagger Q.
\]
That is, replacing $\rho$ by $R\rho$ is accompanied by the replacement of $Q$ by $S^\dagger Q$. Computing now $[Symp]_1$ at the shifted point $(\tilde\rho,\tilde Q)=(R\rho,S^\dagger Q)$ we have
\[
\ba
{[Symp]}_1(\tilde\rho,\tilde Q)&= \langle d(S^\dagger Q)Q^\dagger S\ \overset{\wedge},\ \sigma^{-1}d(R\rho)\rho^\dagger R^\dagger\sigma\rangle\\
&=
\langle\sigma S^\dagger dQQ^\dagger S\sigma^{-1} - \sigma S^\dagger d(S\sigma^{-1})\ \overset{\wedge},\ dRR^\dagger + Rd\rho\rho^\dagger R^\dagger\rangle\\
&=
\langle R\sigma dQQ^\dagger\sigma^{-1}R^\dagger + dRR^\dagger \ \overset{\wedge},\ dRR^\dagger + Rd\rho\rho^\dagger R^\dagger\rangle\\
&=
\langle dQQ^\dagger \ \overset{\wedge},\ \sigma^{-1}R^\dagger d(R\sigma)  + \sigma^{-1}d\rho\rho^\dagger\sigma\rangle,\quad\hbox{using }\ u(n)^\perp = u(n),\\
&=
\langle dQQ^\dagger \ \overset{\wedge},\ -dSS^\dagger  + \sigma^{-1}d\rho\rho^\dagger\sigma\rangle\\
&=
\langle dQQ^\dagger \ \overset{\wedge},\ \sigma^{-1}d\rho\rho^\dagger\sigma\rangle,
\quad\qquad\qquad\hbox{again using }\ u(n)^\perp = u(n),\\
&=
{[Symp]}_1(\rho,Q)
\ \ \qquad\qquad\qquad\qquad\qquad\qquad\qquad\hbox{as required.}
\ea
\]
\end{proof}

\noindent\textbf{Note:} Making use of the freedom to multiply $\rho$ on the left by any $R$, for which $R^\dagger\sigma\sigma^\dagger R=\sigma\sigma^\dagger$, is the factoring by the remaining gauge freedom which was referred to in the remark between (\ref{ggdag}) and (\ref{firstgauge}).

\section{Imposing the KKS Constraint Condition}\label{KKS}

The KKS constraint condition is conveniently expressed, not as an explicit one on $\sigma$, but rather, \emph{equivalently}, as one on the product $\sigma\sigma^\dagger$. Thus, fixing $\alpha\in\RR\backslash\{0,\pm1\}$ and $\hat v\in\CC^n$, $\sigma$ is required to satisfy
\be
\label{KKScondition}
\sigma\sigma^\dagger = \alpha^2{\mathbf I} + \epsilon\hat v\hat v^\dagger,
\ee
where $\epsilon = -\sgn(\log(\alpha^2))$, i.e. $\epsilon$ is $+1$ if $\alpha^2<1$ and $-1$ if $\alpha^2>1$.
If we allow $\alpha$ to be arbitrary, $\hat v$ is restricted by the condition that $\det\sigma=1$, but direct use of this condition is not needed and it becomes taken care of automatically, so no more will be said about it. Moreover, the result we are aiming for is invariant with respect to $\alpha\mapsto\alpha^{-1}$, so we may safely assume that $\alpha^2<1$ and take $\epsilon=1$.

Together with (\ref{KKScondition}), we must solve the condition given by (\ref{finalform}). It is convenient to introduce the vectors $\tilde v:=\rho^\dagger\hat v$ and $v:=\Sigma\tilde v$. Applying (\ref{KKScondition}),
\be\label{withv}
\Sigma\rho^\dagger\sigma\sigma^\dagger\rho\Sigma=\alpha^2\Sigma^2+vv^\dagger
\ee
and it may be argued, using standard gauge-freedom arguments, that 
$v_i\in\RR_{\geq0}$ : concretely, recalling that $\rho$ is only determined modulo right-multiplication by $\TT$, we may modify $\rho$ by multiplying it on the right by a suitable diagonal matrix so that each of the components $v_i$ is real and non-negative.  Subtracting $\lambda$ from both sides of (\ref{finalform}), then using (\ref{withv}) and taking determinants, we have
\[
\det(\Sigma^2-\lambda)=\det(\alpha^2\Sigma^2-\lambda + vv^T)
=
\det(\alpha^2 \Sigma^2-\lambda)[1+v^T(\alpha^2\Sigma^2-\lambda)^{-1}v],
\]
thus,
\be\label{crux}
1+v^T(\alpha^2\Sigma^2-\lambda)^{-1}v = 
\frac{\det(\Sigma^2-\lambda)}
{\det(\alpha^2\Sigma^2-\lambda)}\qquad\forall\lambda\,.
\ee
The residue at $\lambda=\alpha^2\Sigma_k^2$ gives
\be
v_k^2 = \left({\prod_{j\neq k}(\alpha^2\Sigma_j^2-\alpha^2\Sigma_k^2)}\right)^{-1}\prod_{i=1}^n(\Sigma_i^2-\alpha^2\Sigma_k^2)\ .
\ee
Next let's solve for $T$ in (\ref{finalform}), so
$T$ must satisfy
\be\label{nowT}
T^\dagger\Sigma^2T=\alpha^2\Sigma^2 + vv^T.
\ee
This determines $T$ only up to left multiplication by $\TT$. 
Let $T^\dagger=(\t_1,\dots,\t_n)$, with $\mathbf{t}_k\in\CC^n$ satisfying $\t_i^\dagger\t_j=\delta_{ij}$. We get $T^\dagger\Sigma^2T=\sum \Sigma_i^2\t_i\t_i^\dagger$ so that, from (\ref{nowT}), we have
$$
\sum_{i=1}^n\Sigma_i^2\t_i\t_i^\dagger = \alpha^2\Sigma^2 + vv^T.
$$
Letting both sides of the above equation act on the vector $\t_i$, and using the orthonormal property of the vectors $\{\t_k\}$ we obtain
$$
\Sigma_i^2\t_i = \alpha^2\Sigma^2\t_i + (v^T\t_i)v,
$$
i.e.
$$
(\Sigma_i^2\mathbf{I}-\alpha^2 \Sigma^2)\t_i
=
(v^T\t_i)v
\quad\Rightarrow\quad
\t_i
=
(v^T\t_i)(\Sigma_i^2\mathbf{I}-\alpha^2 \Sigma^2)^{-1}v.
$$
In fact, writing $\hat\t_i=(\Sigma_i^2\mathbf{I}-\alpha^2 \Sigma^2)^{-1}v$, $\t_i$ is $\hat\t_i$ normalised. We may check that $\{\hat\t_1,\hat\t_2,\dots,\hat\t_n\}$ is an orthogonal set. Hence the vectors $\t_i$ are completely defined, up to multiplication by an element of $U(1)$
\be\label{explicitT}
\ba
\t_i&=\theta_i\tilt_i,\quad \theta_i\in U(1)\\
\hbox{with}\qquad
\tilt_i&=(\hat\t_i^\dagger\hat\t_i)^{-\frac{1}{2}}\hat\t_i \\
&= 
\bigl[v^T(\Sigma_i^2\mathbf{I}-\alpha^2\Sigma^2)^{-2}v\bigr]^{-\frac{1}{2}}(\Sigma_i^2-\alpha^2\Sigma^2)^{-1}v.
\ea
\ee
Hence we have 
\be
T = P \tilde T ,
\ee
with $\tilde T^\dagger=(\tilt_1,\dots,\tilt_n)$,  
with the vectors $\tilt_i$ defined in (\ref{explicitT}), and 
\be
\TT\owns P=\exp(ip)\, ,\qquad\hbox{with}\qquad p=diag(p_1,\dots,p_n).
\ee 
Let us note that $\tilde T$ is \emph{real}; that is $\tilde T\in O(n)$.

Due to Lemma \ref{rhoinv}, in computing the explicit form of $[Symp]$, we may use the most convenient representative of $\rho$ for our needs, and indeed $\rho$ itself does not appear, but rather only the combination $\rho^\dagger d\rho$. 
How, though, is $\rho$ defined? Well, it must satisfy the condition 
$\rho^\dagger\sigma\sigma^\dagger\rho=\alpha^2{\mathbf I} + \tv\tv^\dagger$, 
with $\tv=\Sigma^{-1}v\in \RR^n$. From this it follows that
\[
d(\tv\tv^\dagger)=[\rho^\dagger\sigma\sigma^\dagger\rho,\rho^\dagger d\rho] = [\tv\tv^\dagger,\rho^\dagger d\rho].
\]
Hence we may use for $\rho^\dagger d\rho$, just any skew symmetric matrix which satisfies this condition, and for this there exists a very simple choice. First of all let us notice that $\tv^\dagger\tv$ is constant: we have 
\[
\tilde v =\rho^\dagger\hat v\ \Rightarrow |\tilde v|^2 = |\hat v|^2.
\]
Hence $d(\tv^\dagger\tv)=0$. Consider now
\[
\ba
{[}\tv\tv^\dagger, \tv d\tv^\dagger - d\tv\tv^\dagger{]} &= (\tv^\dagger\tv)\tv d\tv^\dagger - (\tv^\dagger d\tv)\tv\tv^\dagger - (d\tv^\dagger\tv)\tv\tv^\dagger + (\tv^\dagger\tv)d\tv\tv^T\\ 
&= |\tv|^2( \tv d\tv^\dagger + d\tv\tv^\dagger)
=
 |\tv|^2 d(\tv\tv^\dagger).
\ea
\]
A convenient choice then for $\rho^\dagger d\rho$ is
\be\label{rhochoice}
\rho^\dagger d\rho = |\hat v|^{-2}(\tv d\tv^\dagger - d\tv\tv^\dagger) =  |\hat v|^{-2}(\tv d\tv^T - d\tv\tv^T) .
\ee
With this choice, $\rho^\dagger d\rho$ is \emph{real}.\\

We now have everything we need to compute $[Symp]$ explicitly:

\[
T^\dagger dT = \tilde T^\dagt d\tilde T + i\tilde T^\dagt dp\tilde T\qquad \hbox{and}\qquad
dTT^\dagger = idp+ Pd\tilde T\tilde T^\dagt P^\dagger,
\]
from which we obtain
\[
[Symp]_2= \langle T^\dagger dT + dTT^\dagger \ \overset{\wedge},\ \Sigma^{-1}d\Sigma\rangle
=
tr\bigl(dp\wedge ( \Sigma^{-1}d\Sigma + \tilde T\Sigma^{-1}d\Sigma\tilde T^\dagt) \bigr).
\]
We have
\[
\ba
\Sigma^{-1}T^\dagger\Sigma d(\Sigma^{-1}T\Sigma) &=
\Sigma^{-1}d\Sigma + \Sigma^{-1}T^\dagger dT\Sigma - \Sigma^{-1}T^\dagger\Sigma^{-1}d\Sigma T\Sigma\\
&=
\Sigma^{-1}d\Sigma + \Sigma^{-1}\tilde T^\dagt d\tilde T\Sigma + i \Sigma^{-1}\tilde T^\dagt dp\tilde T\Sigma - \Sigma^{-1}\tilde T^\dagt\Sigma^{-1}d\Sigma \tilde T\Sigma\ ,
\ea
\]
from which we obtain, substituting from (\ref{rhochoice}) and using the fact that with this choice $\rho^\dagger d\rho$ is real,
\[
\ba
{[Symp]}_1 &= |\hat v|^{-2}\langle \tv d\tv^T - d\tv\tv^T\ \overset{\wedge},\  i \Sigma^{-1}\tilde T^\dagt dp\tilde T\Sigma\rangle\\
&=
|\hat v|^{-2}tr\bigl(\tilde T\Sigma(\tv d\tv^T- d\tv\tv^T)\Sigma^{-1}\tilde T^\dagt\wedge dp\bigr).
\ea
\]
In order to simplify the formula for $[Symp]_1$ we need the diagonal part of the matrix \(\tilde T\Sigma(\tilde vd\tilde v^T - d\tilde v\tilde v^T)\Sigma^{-1}\tilde T^T\), so let's compute
\[
\ba
\Bigl[\tilde T\Sigma(\tilde vd\tilde v^T - d\tilde v\tilde v^T)\Sigma^{-1}\tilde T^T\Bigr]_{ii}
&=\\
[v^T(\Sigma_i^2-\alpha^2\Sigma^2)^{-2}v]^{-1}
\sum_{k=1}^n\sum_{l=1}^n&\,(\Sigma_i^2-\alpha^2\Sigma_k^2)^{-1}(\Sigma_i^2-\alpha^2 \Sigma_l^2)^{-1}\, \Sigma_k^2\, \bigl[
\tv_k^2\tv_ld\tv_l - \tv_l^2\tv_kd\tv_k\bigr].
\ea
\]
It is convenient to replace $\Sigma_i^2$ by $\lambda$ in this expression and 
$\tv_kd\tv_k$ by  $\half d(\tilde v_k^2)$, so that we have
\[
\half[v^T(\lambda-\alpha^2\Sigma^2)^{-2}v]^{-1}
\displaystyle{\sum_{k=1}^n\sum_{l=1}^n}\,(\lambda-\alpha^2\Sigma_k^2)^{-1}(\lambda-\alpha^2\Sigma_l^2)^{-1}
\bigl[-\Sigma_k^2d(\tilde v_k^2)\tilde v_l^2
+
\Sigma_k^2\tilde v_k^2d(\tilde v_l^2)\bigr],
\]
which we may rewrite as 
\[
\half[v^T(\lambda-\alpha^2\Sigma^2)^{-2}v]^{-1}
\displaystyle{\sum_{k=1}^n\sum_{l=1}^n}\,(\lambda-\alpha^2\Sigma_k^2)^{-1}(\lambda-\alpha^2\Sigma_l^2)^{-1}
d(\tilde v_k^2)(\tilde v_l^2)\bigl[\Sigma_l^2-\Sigma_k^2\bigr].
\]
Now 
\[
\frac{1}{\lambda-\alpha^2\Sigma_k^2} - \frac{1}{\lambda-\alpha^2\Sigma_l^2}
=
\frac{\alpha^2(\Sigma_k^2-\Sigma_l^2)}{(\lambda-\alpha^2\Sigma_k^2)(\lambda-\alpha^2\Sigma_l^2)}
\]
so we obtain
\[
\ba
&
\frac{1}{2\alpha^2}\frac{1}{[v^T(\lambda-\alpha^2\Sigma^2)^{-2}v]}
\left(
\displaystyle{\sum_{k=1}^nd(\tilde v_k^2)\sum_{l=1}^n
(\lambda-\alpha^2\Sigma_l^2)^{-1}
\tilde v_l^2
-
\displaystyle{\sum_{k=1}^n}
(\lambda-\alpha^2\Sigma_k^2)^{-1}
d(\tilde v_k^2)\sum_{l=1}^n}\tilde v_l^2\right)\\
&=
\frac{1}{2\alpha^2}\frac{1}{[v^T(\lambda-\alpha^2\Sigma^2)^{-2}v]}
\left(
\displaystyle{d|\tilde v|^2\sum_{l=1}^n\,\frac{\tilde v_l^2}{
\lambda-\alpha^2\Sigma_l^2}
-
|\tilde v|^2\sum_{k=1}^n
\frac{d(\Sigma_k^{-2}v_k^2)}{\lambda-\alpha^2\Sigma_k^2}}\right)\\
&=
-
\frac{1}{2\alpha^2}|\hat v|^2 \frac{1}{[v^T(\lambda-\alpha^2\Sigma^2)^{-2}v]}
\displaystyle{\sum_{k=1}^n\,\frac{d(\tilde v_k^2)}{\lambda-\alpha^2\Sigma_k^2}}\ ,
\qquad
\hbox{as}\quad |\tilde v|=|\hat v| \Rightarrow d|\tilde v|^2=0.
\ea
\]
Thus
\[
[Symp]_1=\left.\frac{1}{2\alpha^2}\displaystyle{\sum_{i=1}^n}\left(\displaystyle{[v^T(\lambda-\alpha^2\Sigma^2)^{-2}v]^{-1}\sum_{k=1}^n(\lambda-\alpha^2\Sigma_k^2)^{-1}dp_i\wedge d(\Sigma_k^{-2}v_k^2)}\right)\right|_{\textstyle{\lambda=\Sigma_i^2}}
\]

The $\tilde T\Sigma^{-1}d\Sigma\tilde T^T$ term in $[Symp]_2$ shakes down to the following expression
\[
\bigl(\tilde T\Sigma^{-1}d\Sigma\tilde T^T\bigr)_{ii}=
\frac{1}{2\alpha^2}
\left.\left([v^T(\lambda-\alpha^2\Sigma^2)^{-2}v]^{-1}\displaystyle{\sum_{k=1}^n\Sigma_k^{-2}v_k^2d[(\lambda-\alpha^2\Sigma_k^2)^{-1}]}\right)\right|_{\textstyle{\lambda=\Sigma_i^2}},
\]
and hence
\[
[Symp]_2=\sum_{i=1}^ndp_i\wedge\Sigma_i^{-1}d\Sigma_i
+
\left.\frac{1}{2\alpha^2}
\displaystyle{ \sum_{i=1}^n}\left(\displaystyle{[v^T(\lambda-\alpha^2\Sigma^2)^{-2}v]^{-1} \sum_{k=1}^n \Sigma_k^{-2} v_k^2 dp_i\wedge d[(\lambda-\alpha^2\Sigma_k^2)^{-1}]}\right)\right|_{\textstyle{\lambda=\Sigma_i^2}}.
\]
Putting these together we have
\[
\ba
{[Symp]}
&=
\sum_{i=1}^n dp_i\wedge\Sigma_i^{-1}d\Sigma_i
+
\left.\frac{1}{2\alpha^2}
\displaystyle{ \sum_{i=1}^n}\left(\displaystyle{[v^T(\lambda-\alpha^2\Sigma^2)^{-2}v]^{-1} \sum_{k=1}^n dp_i\wedge d\left[\frac{\tilde v_k^2}{\lambda-\alpha^2\Sigma_k^2}\right]}\right)\right|_{\textstyle{\lambda=\Sigma_i^2}}\\
&=
\sum_{i=1}^n\ dp_i\wedge\Sigma_i^{-1}d\Sigma_i\\
&\qquad\qquad
+
\frac{1}{2\alpha^2}
\displaystyle{\sum_{i=1}^n}\ dp_i\wedge 
\left.\left(
[v^T(\lambda-\alpha^2\Sigma^2)^{-2}v]^{-1}
d\Bigl[v^T(\lambda-\alpha^2\Sigma^2)^{-1}\Sigma^{-2}v\Bigr]
\right)\right|_{\textstyle{\lambda=\Sigma_i^2}}.
\ea
\]
This is a very useful expression, as we may find a concise form for the combination\\
$v^T(\lambda-\alpha^2\Sigma^2)^{-1}\Sigma^{-2}v$.

\subsection{Computing a convenient expression for $d\left[v^T(\lambda-\alpha^2\Sigma^2)^{-1}\Sigma^{-2}v\right]$:}

Consider
\[
\ba
|\tilde v|^2 
&= v^T\Sigma^{-2}v=v^T\Sigma^{-2}(\lambda-\alpha^2\Sigma^2)(\lambda-\alpha^2\Sigma^2)^{-1}v\\
&= \lambda v^T\Sigma^{-2}(\lambda-\alpha^2\Sigma^2)^{-1}v 
- \alpha^2 v^T(\lambda-\alpha^2\Sigma^2)^{-1}v\\
\Rightarrow
0=d(|\tilde v|^2) &= \lambda\, d\Bigl(v^T\Sigma^{-2}(\lambda-\alpha^2\Sigma^2)^{-1}v\Bigr)
- \alpha^2\, d\Bigl(v^T(\lambda-\alpha^2\Sigma^2)^{-1}v\Bigr).
\ea
\]
Thus
\be
d\Bigl(v^T\Sigma^{-2}(\lambda-\alpha^2\Sigma^2)^{-1}v\Bigr) 
= \frac{\alpha^2}{\lambda}d\Bigl(v^T(\lambda-\alpha^2\Sigma^2)^{-1}v\Bigr)
\qquad\forall\lambda.
\ee
From (\ref{crux}) we already have
\[
v^T(\lambda-\alpha^2\Sigma^2)^{-1}v = 
1 - \frac{\det(\lambda-\Sigma^2)}{\det(\lambda-\alpha^2\Sigma^2)}\ ,
\]
which implies
\be
d\bigl[v^T(\lambda-\alpha^2\Sigma^2)^{-1}v\bigr]
=
2\frac{\det(\lambda-\Sigma^2)}{\det(\lambda-\alpha^2\Sigma^2)}
\sum_{k=1}^n\Bigl((\lambda-\Sigma_k^2)^{-1} - \alpha^2(\lambda-\alpha^2\Sigma_k^2)^{-1}\Bigr)\Sigma_kd\Sigma_k\ ,
\ee
and also
\be
\ba
v^T(\lambda-\alpha^2\Sigma^2)^{-2}v &= \frac{d}{d\lambda}v^T(\alpha^2\Sigma-\lambda)^{-1}v \\
&=
\frac{\det(\lambda-\Sigma^2)}{\det(\lambda-\alpha^2\Sigma^2)}
\sum_{k=1}^n\Bigl((\lambda-\Sigma_k^2)^{-1} - (\lambda-\alpha^2\Sigma_k^2)^{-1}\Bigr).
\ea
\ee
Putting all these together we obtain
\be
\ba
&\Bigl(v^T(\lambda-\alpha^2\Sigma^2)^{-2}v\Bigr)^{-1}
d\Bigl(v^T\Sigma^{-2}(\lambda-\alpha^2\Sigma^2)^{-1}v\Bigr) 
=\\
&\qquad
2
\frac{\alpha^2}{\lambda}
\left(\sum_{l=1}^n\Bigl[(\lambda-\Sigma_l^2)^{-1} - (\lambda-\alpha^2\Sigma_l^2)^{-1}\Bigr]\right)^{-1}
\left(\sum_{k=1}^n\Bigl[(\lambda-\Sigma_k^2)^{-1} - \alpha^2(\lambda-\alpha^2\Sigma_k^2)^{-1}\Bigr]\Sigma_kd\Sigma_k\right)
\ea
\ee
and taking the limit as $\lambda\rightarrow\Sigma_i^2$ gives
\be
\left.\Bigl(v^T(\lambda-\alpha^2\Sigma^2)^{-2}v\Bigr)^{-1}
d\Bigl(v^T\Sigma^{-2}(\lambda-\alpha^2\Sigma^2)^{-1}v\Bigr)\right|_{\textstyle{\lambda=\Sigma_i^2}}
=
2\alpha^2\Sigma_i^{-1}d\Sigma_i.
\ee
Hence we arrive at the formula
\be\label{finalsymp}
[Symp] = 2\sum_{i=1}^ndp_i\wedge\Sigma_i^{-1}d\Sigma_i,
\ee
and so find ourselves in the happy position of having discovered canonical coordinates on the reduced space via our first natural choice. In standard form, the canonical coordinates $(q,p)$ are of course given by $\Sigma_i=\exp(q_i)$.

\section{Commuting Hamiltonians} 
The functions in the commuting family of Proposition \ref{commies} are all invariant with respect to the actions of $K_+$ and so induce a commuting family on the reduced space. The simplest one of these is $\Phi_1$. We have
\[
\ba
\omega^\dagger\omega &= (\Omega^\dagger-x^{-1}\Gamma)\Sigma^{-2}(\Omega-x^{-1}\Gamma)\\
&=
T^\dagger\Lambda^2\Sigma^{-2}T-x^{-1}(T^\dagger\Lambda\Sigma^{-2}\Gamma+\Gamma\Sigma^{-2}\Lambda T) +x^{-2}\Sigma^{-2}\Gamma^2\\
&=
T^\dagger(y^2\Sigma^{-2}+x^2)T - x^{-1}(T^\dagger\Lambda\Sigma^{-2}\Gamma+\Gamma\Sigma^{-2}\Lambda T) + x^{-2}(\Sigma^{-2}+\mathbf{I}).
\ea
\]
The definition
\be
\ba
\Phi_1&= -\half\tr (g^\dagger I_{nn}gI_{nn}) \\ 
&= {\half}tr(\omega^\dagger\omega) -{\half} n(x^2+x^{-2})
\\
&=
{\half}(y^2+x^{-2})tr(\Sigma^{-2})  - {\half}x^{-1}tr\Lambda\Gamma\Sigma^{-2}\tilde T^T(P+P^*)
\ea
\ee
results in
\be\label{sigmaformofHam}
\ba
\Phi_1 &=
{\half}(x^{-2}+y^2)\sum_{i=1}^n\Sigma_i^{-2}\\ 
&\qquad- x^{-1}\sum_{i=1}^n(\cos p_i)\sqrt{1+\Sigma_i^{-2}}\sqrt{x^2+y^2\Sigma_i^{-2}}\displaystyle{\prod_{k\neq i}}\frac{\sqrt{\alpha^{-1}\Sigma_k^2-\alpha\Sigma_i^2}\sqrt{\alpha\Sigma_k^2-\alpha^{-1}\Sigma_i^2}}{(\Sigma_k^2-\Sigma_i^2)}\ .
\ea
\ee

At this juncture it is easy to see that there is a $BC_n$ Weyl group action on the reduced space leaving the Hamiltonian $\Phi_1$ invariant. This action is generated by the symplectic lifts of the operations
\be
\ba
&\Delta_i\mapsto -\Delta_i\qquad\qquad\quad\  i=1,\dots,n\\
&(\Delta_i,\Delta_k)\mapsto (\Delta_k,\Delta_i)\qquad i,k = 1,\dots, n
\ea
\ee
which, comparing with (\ref{finalsymp}), is represented in terms of the coordinates $\{(\Sigma,p)\}$ by
\be
\ba
&(\Sigma_i,p_i) \mapsto (-\Sigma_i,+p_i)\qquad\qquad\qquad\ i=1,\dots,n\\
&(\Sigma_i,\Sigma_k,p_i,p_k)\mapsto (\Sigma_k,\Sigma_i,p_k,p_i)\qquad i,k = 1,\dots, n
\ea
\ee
The invariance claim is now revealed by direct inspection of (\ref{sigmaformofHam}).

\noindent\textbf{Remark:}
In the context of the procedure used to arrive at this point, going back to where a restriction was made to an open subset of the unreduced phase space, defined by the requirement that $\Delta$ lie within some particular fixed Weyl chamber, the above action of the Weyl group acts by changing the Weyl chamber defined by (\ref{Weylchambercondition}) to some other one, without the need for any change in the reduction argument.

Equation (\ref{sigmaformofHam}) may be rewritten as
\be\label{qformofHam}
\begin{aligned}
&\Phi_1 = {\half}(x^{-2}+y^2)\sum_{i=1}^ne^{-2q_i}\\
&-
\sum_{i=1}^n (\cos p_i)
\left[{ 1 + \left({1+ \frac{y^2}{x^2}}\right)e^{-2q_i} + \frac{y^2}{x^2}e^{-4q_i}}\right]^{\half}
\prod_{k\neq i}\left[ 1 - \frac{(\alpha-\alpha^{-1})^2}{4\sinh^2(q_i-q_k)}\right]^{1/2}
\end{aligned}
\ee
Aside from the first extra term in the Hamiltonian (\ref{qformofHam}), it has the standard form of a Ruijsenaars type model, and it is invariant under the action of the Weyl group for the $BC_n$ root system, whilst it depends on the three independent parameters $\alpha, x, y$.

\bigskip

\subsection{Linearisation}

``Linearisation'' is the passage from the Heisenberg double to the cotangent bundle, which is the same as the semi-direct product $K\ltimes\b = K\ltimes\k^*$. This is equivalent to assuming that the $B$-component of $g$ consists of elements of the form ${\bf I} + tX + O(t^2)$ with $X\in\b$, and $t$ very small.

The cotangent bundle limit involves the substitutions
\[
x=\exp(t\xi),\quad y=\exp(t\eta),\quad \alpha =\exp(t\zeta),\quad p=t\pi
\]
and re-scaling of the symplectic structure. We discover that we must replace $\pi$ by $\Gamma^{-1}\Sigma\hat p$ in order to have the Hamiltonian in the form $\half|\hat p|^2 + V(\hat q)$. This results in the symplectic structure having the form $\sum d\hat p_i\wedge d\hat q_i$, for $\hat q=\Delta$, and the Hamiltonian is the same as the three-parameter hyperbolic $BC_n$ Sutherland Hamiltonian from \cite{Inozemtsev} and \cite{FehPus}.

Explicitly :
\[\ba
&x^{-2}+y^2 = 2 + 2t(\eta-\xi) + 2 t^2(\eta^2-\xi^2) +\cdots\ ,  \\
&\cos(t\pi_i) = 1 - \half t^2\pi_i^2+\cdots\ ,  \\
&\left(1+\frac{y^2}{x^2}\Sigma_i^{-2}\right)^{1/2} = 
(1+\Sigma_i^{-2})^{1/2}\bigl(1+2t(\eta-\xi)\Gamma_i^{-2} + 2t^2(\eta-\xi)^2\Gamma_i^{-2}+\cdots\bigr)^{1/2}\\
&\qquad\qquad\qquad=
(1+\Sigma_i^{-2})^{1/2}\bigl(1+t(\eta-\xi)\Gamma_i^{-2}+t^2(\eta-\xi)^2\Gamma_i^{-2} 
- \half t^2(\eta-\xi)^2\Gamma_i^{-4}+\cdots\bigr)\ ,  \\
&(\Sigma_k^2-\Sigma_i^2)^{-1}\bigl(\alpha^{-1}\Sigma_k - \alpha\Sigma_i^2\bigr)^{1/2}
(\alpha\Sigma_k - \alpha^{-1}\Sigma_i^2\bigr)^{1/2}\\
&\qquad\qquad=
(\Sigma_k^2-\Sigma_i^2)^{-1}\Bigl(\Sigma_k^2-\Sigma_i^2- t\zeta(\Sigma_k^2+\Sigma_i^2) + {\half}t^2\zeta^2(\Sigma_k^2 - \Sigma_i^2)+\cdots\Bigr)^{1/2}\times\\
&\qquad\qquad\qquad\qquad\qquad\qquad\qquad\Bigl(\Sigma_k^2-\Sigma_i^2 + t\zeta(\Sigma_k^2 + \Sigma_k^2) + {\half}t^2\zeta^2(\Sigma_k^2-\Sigma_i^2)+\cdots\Bigr)^{1/2}\\
&\qquad\qquad=
(\Sigma_k^2-\Sigma_i^2)^{-1}\Bigl((\Sigma_k^2-\Sigma_i^2)^2 + t^2\zeta^2(\Sigma_k^2 - \Sigma_i^2)^2 - t^2 \zeta^2(\Sigma_k^2+\Sigma_i^2)^2 +\cdots\Bigr)^{1/2}\\
&\qquad\qquad=
\left(1
- 4t^2\zeta^2\frac{\Sigma_i^2\Sigma_k^2}{(\Sigma_k^2-\Sigma_i^2)^2}+\cdots\right)^{1/2}\\
&\qquad\qquad=
\left(1
- 2t^2\zeta^2\frac{\Sigma_i^2\Sigma_k^2}{(\Sigma_k^2-\Sigma_i^2)^2}+\cdots\right)\ .
\ea
\]
Thus,
\[
\ba
\Phi_1(t)&= \Bigl(1+t(\eta-\xi) + t^2(\eta^2-\xi^2)+\cdots\Bigr)\sum_{i=1}^n\Sigma_i^{-2}\\
&-\sum_{i=1}^n\bigl(1-\half t^2\pi_i^2+\cdots\bigr)(1+\Sigma_i^{-2})\Bigl[1+t(\eta-\xi)\Gamma_i^{-2} + t^2(\eta-\xi)^2\Gamma_i^{-2} - \half t^2(\eta-\xi)^2\Gamma_i^{-4}+\cdots\Bigr]\times  \\
&\qquad\qquad\times \prod_{k\neq i}\left(1- 2t^2\zeta^2\frac{\Sigma_i^2\Sigma_k^2}{(\Sigma_i^2-\Sigma_k^2)^2}\cdots\right)\\
&=
H_0 + tH_1 + t^2H_2+\cdots
\ea
\]
with
\[\ba
H_0 &= -n,\quad
H_1 = 0,\\
H_2 &= {\half}\sum_{i=1}^n\left(\frac{\Gamma_i\pi_i}{\Sigma_i}\right)^2 + 
[(\eta^2-\xi^2) - (\eta-\xi)^2]\sum_{i=1}^n\Sigma_i^{-2}\\ 
&\qquad\qquad+ {\half}(\eta-\xi)^2\sum_{i=1}^n\Sigma_i^{-2}\Gamma_i^{-2}
+
2\zeta^2\sum_{i=1}^n\sum_{k\neq i}\frac{\Gamma_i^2\Sigma_k^2}{(\Sigma_i^2-\Sigma_k^2)^2}.
\ea
\]

Let us rescale the symplectic structure, which is equivalent to rescaling time, thus
\[
[Symp] \mapsto \widehat{[Symp]}=t[Symp],
\]
and
\[
\widehat{[Symp]} = t^2\sum_{i=1}^nd\pi_i\wedge\Sigma_i^{-1}d\Sigma_i
=
t^2\sum_{i=1}^nd\left(\frac{\Gamma_i\pi_i}{\Sigma_i}\right)\wedge\Gamma_i^{-1}d\Sigma_i
=
t^2\sum_{i=1}^nd\hat p_i\wedge d\Delta_i = t^2 \cS, \quad\hbox{say}.
\]
On the other hand, the Hamiltonian vector-field $\XX_\Phi$ is defined, with respect to $\widehat{[Symp]}$ by
\[
t^2dH_2 + O(t^3) = d\Phi = \widehat{[Symp]}(\ \cdot\ ,\XX_\Phi) =  t^2\cS(\ \cdot\ ,\XX_{H_2}) + O(t^3)
\]
Hence, taking the limit $t\to0$, we have
\[
dH_2=\cS(\ \cdot\ ,\XX_{H_2})
\]
and so the limit is $H_2$, with canonical coordinates $\hat p_i=\Sigma_i^{-1}\Gamma_i\pi_i$ and $\hat q_i=\Delta_i$. $H_2$ is recognised as the general hyperbolic $BC_n$ Sutherland Hamiltonian. To see this, the only difficult term is the last one. We have
\[
2\sum_{i=1}^n\sum_{j\neq i}\frac{\Gamma_i^2\Sigma_j^2}{(\Sigma_i^2-\Sigma_j^2)^2}
=
\sum_{i,j,\ i\neq j}\frac{\Gamma_i^2\Sigma_j^2 + \Sigma_i^2\Gamma_j^2}{(\Sigma_i^2-\Sigma_j^2)^2}.
\]
\[
\Sigma_i^2-\Sigma_j^2 = \Sigma_i^2\Gamma_j^2 - \Sigma_j^2\Gamma_i^2
=
(\Sigma_i\Gamma_j+\Sigma_j\Gamma_i)(\Sigma_i\Gamma_j-\Sigma_j\Gamma_i)
=
\sinh(\hat q_i+\hat q_j)\sinh(\hat q_i-\hat q_j),
\]
and
\[
\Sigma_i^2\Gamma_j^2 + \Sigma_j^2\Gamma_i^2 = \half\Bigl(
(\Sigma_i\Gamma_j+\Sigma_j\Gamma_i)^2 + (\Sigma_i\Gamma_j-\Sigma_j\Gamma_i)^2\Bigr)
=
\half\Bigl(\sinh^2(\hat q_i+\hat q_j) + \sinh^2(\hat q_i-\hat q_j)\Bigr).
\]
Hence the last term is
\[
\zeta^2\sum_{i,j\ i\neq j}\left[\frac{1}{\sinh^2(\hat q_i+\hat q_j)} + \frac{1}{\sinh^2(\hat q_i-\hat q_j)}\right].
\]
The other terms are easily expressed as functions of $\hat q$ and we have
\[
\ba
H_2&= {\half}\sum_{i=1}^n\hat p_i^2 + c_1\sum_{i=1}^n\frac{1}{\sinh^2\hat q_i} +
c_2\sum_{i=1}^n\frac{1}{\sinh^2(2\hat q_i)}\\ 
&\qquad\qquad\qquad + c_3\sum_{i,j\ i\neq j}\left[\frac{1}{\sinh^2(\hat q_i+\hat q_j)} + \frac{1}{\sinh^2(\hat q_i-\hat q_j)}\right].
\ea
\]

\end{document}